\documentclass[aps,prl,twocolumn,superscriptaddress,floatfix,nofootinbib,showpacs,longbibliography,groupedaddress]{revtex4-1}

\usepackage[utf8]{inputenc}  
\usepackage[T1]{fontenc}     
\usepackage[table]{xcolor}
\usepackage[british]{babel}  
\usepackage[sc,osf]{mathpazo}\linespread{1.05}  
\usepackage[scaled=0.86]{berasans}  
\usepackage[colorlinks=true, citecolor=blue, urlcolor=blue]{hyperref}  
\usepackage{graphicx} 
\usepackage[babel]{microtype}  
\usepackage{amsmath,amssymb,amsthm,bm,amsfonts,mathrsfs,bbm} 

\usepackage{xspace}  
\usepackage{pgf,tikz}
\usepackage{xcolor}
\usepackage{multirow}
\usepackage{array}
\usepackage{bigstrut}
\usepackage{braket}
\usepackage{color}
\usepackage{natbib}
\usepackage{multirow}
\usepackage{mathtools}
\usepackage{float}
\usepackage[caption = false]{subfig}
\usepackage{xcolor,colortbl}
\usepackage{color}
\usepackage{rotating}

\newcommand{\be}{\begin{equation}}
\newcommand{\ee}{\end{equation}}
\newcommand{\ba}{\begin{eqnarray}}
\newcommand{\ea}{\end{eqnarray}}

\newtheorem{theorem}{Theorem}






\def\>{\rangle}
\def\<{\langle}







\usepackage{centernot}
\usepackage{subfig}

\begin{document}

\title{From no causal loop to absoluteness of cause: discarding the quantum NOT logic}

\author{Anandamay Das Bhowmik}
\affiliation{Physics and Applied Mathematics Unit, Indian Statistical Institute, 203 BT Road, Kolkata, India.}

\author{Preeti Parashar}
\affiliation{Physics and Applied Mathematics Unit, Indian Statistical Institute, 203 BT Road, Kolkata, India.}

\author{Guruprasad Kar}
\affiliation{Physics and Applied Mathematics Unit, Indian Statistical Institute, 203 BT Road, Kolkata, India.}

\author{Manik Banik}
\affiliation{School of Physics, IISER Thiruvananthapuram, Vithura, Kerala 695551, India.}

\begin{abstract}
The principle of `absoluteness of cause' (AC) assumes the cause-effect relation to be observer independent and is a distinct assertion than prohibiting occurrence of any causal loop. Here, we study implication of this novel principle to derive a fundamental no-go result in quantum world. AC principle restrains the `time order' of two spacelike separated events/processes to be a potential cause of another event in their common future, and in turn negates existence of a quantum device that transforms an arbitrary pure state to its orthogonal one. The present {\it no-go} result is quite general as its domain of applicability stretches out from the standard linear quantum theory to any of its generalizations allowing deterministic or stochastic nonlinear evolution. We also analyze different possibilities of violating the AC principle in generalized probability theory framework. A strong form of violation enables instantaneous signaling, whereas a weak form of violation forbids the theory to be locally tomographic. On the other hand, impossibility of an intermediate violation suffices to discard the universal quantum NOT logic.    
\end{abstract}



\maketitle
{\it Introduction.--}  No-go principles hypothesized from empirical or/and metaphysical stands play crucial roles in constructing theories of the physical world. For instance, the second law of thermodynamics is a no-go that isolates unfavored directions of action among the physical processes \cite{Lieb99}, whereas the principle of relativistic causality places an embargo on the occurrence of any causal loop in the spacetime structure so that no grandfather kind of paradox arises \cite{Hawking92,Horodecki19}. While these principles have general appeal and are considered to be true across different possible theories, more specific no-go results can be derived while considering a particular theory. Bell and Kochen–Specker no-go theorems are seminal such examples in quantum foundations that exclude certain classical like interpretations of quantum predictions \cite{Bell66,Kochen67,Mermin93}. On the other hand, advent of quantum information theory identifies a second type of no-go results, such as no-cloning theorem \cite{Wootters82}, no-flipping theorem \cite{Massar95,Gisin99,Enk05}, no-broadcasting theorem \cite{Barnum96}, no-programming theorem \cite{Nielsen97}, no-deleting theorem \cite{Pati00}, no-hiding theorem \cite{Braunstein07}, that follow algebraically from the Hilbert space formulation of quantum theory. Naturally the question arises how robust these second type of no-go theorems are, {\it i.e.} if quantum theory needs to be modified by some more general theory what will be the status  of these no-go theorems in the modified theory. This question motivates re-derivation of these second kind of no-go's from more general theory independent physical principles. Interestingly, it has been shown that the no-cloning and no-deleting theorems can be derived from the assumption of no instantaneous signalling conditions \cite{Gisin98,Hardy99,Pati03}. With a similar aspiration the authors in \cite{Chattopadhyaya06} have tried to obtain the no-flipping theorem starting from some general principle. However, their proof presupposes further structure on quantum state evolution which is an additional demand to a general flipping device. The proof in \cite{Chattopadhyaya06}, therefore, lacks some generality and motivates further look towards principle based derivation of no-flipping theorem. 

In this work we first analyze a causal principle, namely the principle of absoluteness of cause (AC) that assumes the cause-effect relation to be observer independent. Interestingly this principle has operational implications in generalized physical theories. While considering more than two events/processes, the AC principle prohibits the `time order' of two spacelike separated events/processes to be a potential cause of another event in their common causal future. Quite fascinatingly this implication of AC helps us to establish the quantum no-flipping theorem without assuming any further structure on quantum state evolution. The present no-go theorem, therefore, not only applies to  standard textbook quantum theory that assumes linear Schr\"{o}dinger dynamics, rather it holds true in any generalized version of quantum theory as proposed by Weinberg, Pearle, Gisin, Ghirardi, Diosi, Polchinski, and others that allow nonlinear evolution rule either deterministic or stochastic \cite{Pearle76,Gisin81,Gisin84,Ghirardi86,Diosi88(1),Diosi88(2),Weinberg89(1),Weinberg89(2),Gisin89,Polchinski91,Gisin95,Weinberg12}. We further analyze different possibilities of violating the AC principle in generalized probabilistic theories \cite{Barrett07,Chiribella10,Masanes11,Hardy11,Hardy12}. It turns out that, a `strong' form of violation makes instantaneous signaling possible. On the other hand, a `weak' violation of AC principle, in an operational theory, makes the theory incompatible with {\it tomographic locality} condition, which is postulated to hold true in different approaches of axiomatic reconstruction of quantum theory  \cite{Barrett07,Chiribella10,Masanes11,Hardy11,Hardy12}. In quantum theory, allowing linear as well as deterministic/ stochastic nonlinear evolution, impossibility of an intermediate violation, lying in between the weak and strong ones, suffices to discard the universal NOT operation.

{\it The principle of absoluteness of cause.--} 
Causality deals with relationship between causes and effects and plays crucial role in development of the theory of knowledge and shapes our worldview \cite{Pearl09}. In generic sense, a cause can be an event, a process, space-time relations among several events or processes (for instance, their 'time order'), or physical interactions between systems, or any other variable that is responsible (either partially or fully) for the occurrence of another set of events, processes or space-time relations thereof or correlations, generally called an effect. The principle of no causal loop (NCL)  prohibits some disagreements among different observers regarding cause-effect relation. The statement of NCL can be formulated as the following,
\begin{itemize}
\item[ NCL:] If $A$ is the cause of $B$ for some observer, then for any other observer, $B$ cannot be a cause of $A$.
\end{itemize} 
The framework of special theory of relativity, where events are described by spacetime coordinates associated with different inertial observers is compatible with NCL since light-cone ordering remains invariant under Lorentz boost, which in tern, suffices to prevent causal loops and thereby rules out the possibility of grandfather kind of paradoxes. \cite{Self1,Godel49}. Notably, in bipartite Bell scenario the aforesaid statement has shown to be equivalent to the no-signalling conditions, whereas it is a strictly weaker demand than no-signalling conditions in multipartite scenario \cite{Horodecki19}. 
However, the assertion of the principle of absoluteness of cause (AC) is different from NCL as it prohibits a different kind of disagreement among different observers.
\begin{itemize}
\item[AC:]  If $A$ is the cause of $B$ for some observer, then $A$ will remain as the cause of $B$ for all observers.
\end{itemize}
AC principle demands the cause to be observer independent and hence absolute. It forbids a different kind of paradox, where, a sculptor creates a sculpture in some reference frame, whereas to an observer in some other frame the sculpture is created without being caused by its creator. Clearly, no causal loop is formed in this paradox. In this work our aim is to study operational implications of this causal principle. 
\begin{figure}[t!]
\centering
\includegraphics[width=0.45\textwidth]{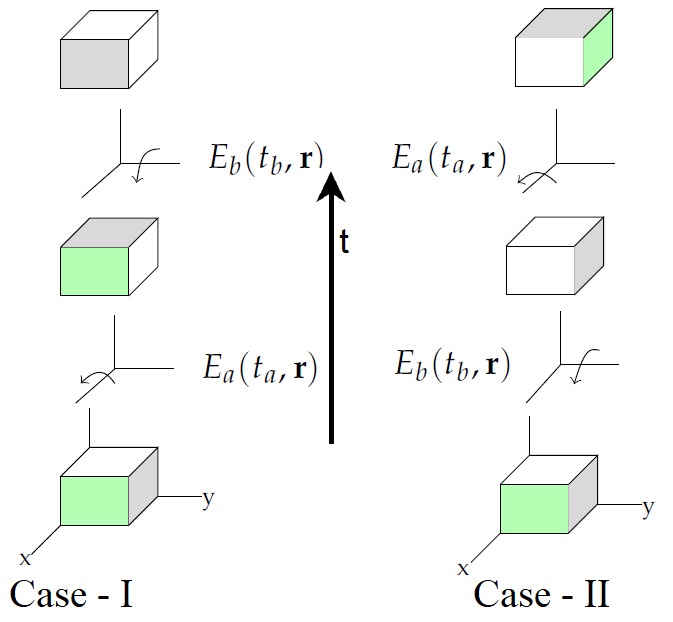}
\caption{(Color online) Temporal orders of timelike separated processes can lead to distinct futures. Let, $\mathrm{E}_a$ and $\mathrm{E}_b$ denote $90^\circ$ rotation along $x$-axis and $y$-axis, respectively. Consider a cube with different colored surfaces. Starting from the same initial configuration, its final configurations differ depending on whether $\mathrm{E}_a$ precedes (succeeds) $\mathrm{E}_b$. Case-I and case-II depict two different final configurations.}\label{fig1}
\end{figure}

The principle of AC puts nontrivial restriction on potential causes. Usually cause is assumed to be associated with events in spacetime or processes carried out in sufficiently small region of spacetime.  However, in the scenarios involving multiple processes or/and events the spacetime relation between those events or processes (for instance, their `time order') can also be the cause. In such a scenario AC principle brings nontrivial implications. Consider two timelike separated processes $\mathrm{E}_a(t_a,{\bf r}_a)$ and $\mathrm{E}_b(t_b,{\bf r}_b)$, in some reference frame $\mathcal{I}\equiv(t,\mathbf{r})$. Following two temporal orders are possible: (i) $t_a<t_b$, {\it i.e.} $\mathrm{E}_a(t_a,{\bf r}_a)$ lies in the past light cone of $\mathrm{E}_b(t_b,{\bf r}_b)$, (ii) $t_b<t_a$, {\it i.e.} $\mathrm{E}_a(t_a,{\bf r}_a)$ lies in the future light cone of $\mathrm{E}_b(t_b,{\bf r}_b)$. These two different temporal orders can lead to distinct futures, {\it i.e.} depending on which of these two possible orders is realized, distinct observations are possible in their common future. One such example is discussed in Figure \ref{fig1}. Being timelike separated the temporal order of these two processes are observer independent and hence AC principle does not prohibit their time-order to be a cause of the events in their common future. Consider now $\mathrm{E}_a(t_a,{\bf r}_a)$ and $\mathrm{E}_b(t_b,{\bf r}_b)$ to be spacelike separated. The temporal order between them becomes observer dependent - there exist different inertial frames where all the different possibilities can arise (see Figure \ref{fig2}). The principle of AC prohibits the possible time orders of such spacelike separated processes/events to be a potential cause of another event at their common future. Time order being observer dependent, in this case, lacks absoluteness and hence cannot be a potential cause. In the following we will employ this to exclude the possibility of a universal quantum NOT operation, {\it i.e.} a universal flipping device.
\begin{figure}[t!]
\centering
\includegraphics[width=0.48\textwidth]{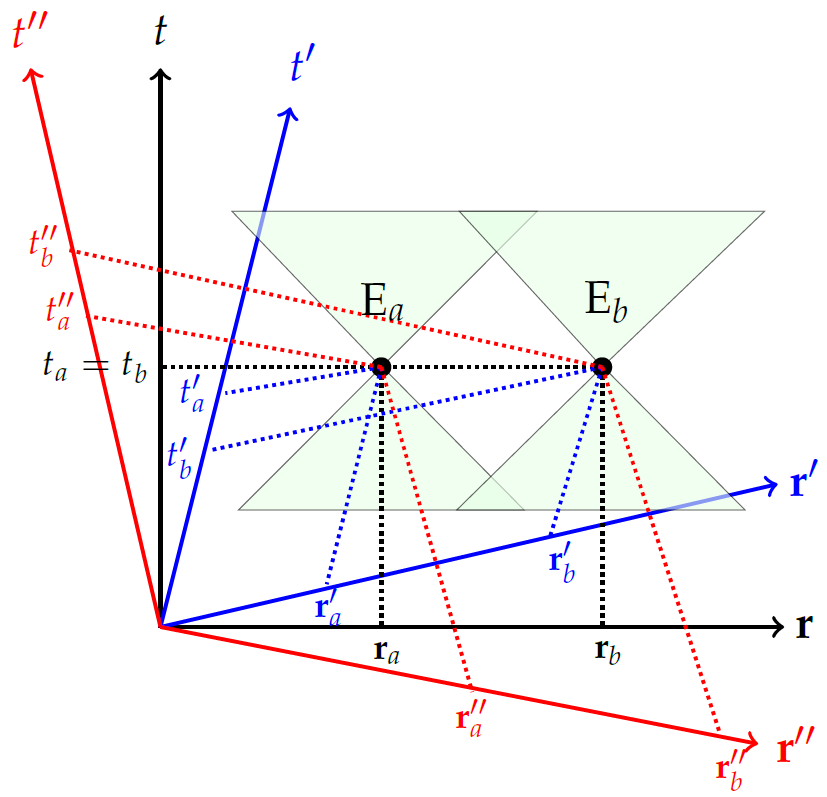}
\caption{(Color online) Different possible temporal orders between two spacelike separated events or processes. $\mathrm{E}_a$ and $\mathrm{E}_b$ are simultaneous in reference frame $\mathcal{I}\equiv(t,{\bf r})$; to an observer in reference frame $\mathcal{I}^\prime$, $\mathrm{E}_b$ occurs earlier than $\mathrm{E}_a$; on the other hand, in reference frame $\mathcal{I}^{\prime\prime}$ , $\mathrm{E}_a$ precedes $\mathrm{E}_b$. The principle of AC prohibits these different possible time orderings of the $\mathrm{E}_a$ and $\mathrm{E}_b$ to be the potential cause of some event at their common future.}\label{fig2}
\end{figure}

{\it No universal quantum flipper.--} The demand from a quantum flipping device is to get an orthogonal state $\ket{\psi^\perp}$ as output for every input state $\ket{\psi}$. If we consider linear evolution as described by Schr\"{o}dinger equation, impossibility of such a device is immediate \cite{Massar95,Gisin99,Enk05}. Some fundamental considerations, however, suggest nonlinear generalization of the Schr\"{o}dinger evolution \cite{Mielnik74,Birula76,Kibble78,Weinberg89(1),Weinberg89(2),Doebner92}. In particular, unification of the Schr\"{o}dinger dynamics and the nonlinear  projection dynamics demands modification of the deterministic linear Schr\"{o}dinger evolution rule. While the deterministic nonlinear generalizations as proposed by Weinberg \cite{Weinberg89(1),Weinberg89(2)} have shown to be inconsistent with the second law of thermodynamics \cite{Peres89} and relativistic causality principle \cite{Gisin90,Polchinski91,Czachor91}, invoking appropriate stochasticity such models can be made consistent    
\cite{Pearle76,Gisin81,Gisin84,Ghirardi86,Diosi88(1),Diosi88(2),Gisin89,Gisin95,Weinberg12} (see also \cite{Simon01}). All these models motivate a re-look to the possibility of a universal flipping device there. Without considering any particular evolution rule, in the following we, however, prove a generic no-go result.
\begin{theorem}\label{theo1}
Any formulation of quantum theory satisfying the principle of AC will not allow a universal flipping device. 	
\end{theorem}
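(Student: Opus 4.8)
The plan is to argue by contradiction: I assume a universal flipper $F$ exists, sending every pure qubit state $\ket{\psi}$ to an orthogonal $\ket{\psi^\perp}$, and I allow $F$ to be an arbitrary (possibly nonlinear or stochastic) process whose \emph{only} stipulated property is this action on pure inputs. I would then engineer a scenario with two spacelike separated processes whose temporal order, were $F$ to exist, becomes a genuine cause in their common future — exactly what AC forbids.

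Concretely, let Alice and Bob share a singlet $\ket{\Psi^-}$ and be spacelike separated. Let $\mathrm{E}_a$ be Alice's measurement of her half in a basis $\{\ket{\psi},\ket{\psi^\perp}\}$ that she is free to choose among the three Pauli (mutually unbiased) bases, and let $\mathrm{E}_b$ be Bob applying $F$ to his half. Both particles are afterwards routed to a common-future region where one fixed joint measurement is performed. By the reasoning established above, AC demands that the statistics recorded there be independent of whether $\mathrm{E}_a$ precedes or succeeds $\mathrm{E}_b$, since for spacelike processes this order is frame dependent and hence cannot be a cause.

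Next I compare the two orders. In the \emph{Alice-first} order her outcome collapses Bob's half to a definite pure state ($\ket{\psi^\perp}$ or $\ket{\psi}$) \emph{before} $F$ acts; invoking only the defining action of $F$ on pure states, Bob's particle becomes $\ket{\psi}$ (resp. $\ket{\psi^\perp}$), so the joint state delivered to the common future is perfectly \emph{correlated} in Alice's chosen basis — and this step never uses linearity of $F$. In the \emph{Bob-first} order, $F$ acts on Bob's half while it is still entangled and \emph{before} Alice's basis is realised; by relativistic causality the resulting joint state $\tau_{AB}$ is one fixed state, independent of Alice's spacelike/future free choice, and Alice's later measurement merely dephases the $A$ side in her basis. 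Hence AC forces this single $\tau_{AB}$ to reproduce perfect $A$–$B$ correlation in \emph{every} one of the three Pauli bases simultaneously. But no two-qubit state can be perfectly correlated in all three at once, obstructed by $(\sigma_x\!\otimes\!\sigma_x)(\sigma_y\!\otimes\!\sigma_y)(\sigma_z\!\otimes\!\sigma_z)=-\mathbb{1}$, whose eigenvalue product $-1$ is incompatible with correlation value $+1$ in each basis. The contradiction rules out $F$.

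I expect the delicate step to be the justification that, in the Bob-first order, the flipper's output is a \emph{single} state $\tau_{AB}$ that cannot anticipate Alice's basis, and that this — together with the pure-state-only characterisation used in the Alice-first order — keeps the entire argument valid for deterministic/stochastic nonlinear dynamics, where one may \emph{not} compute $F(\tau_{AB})$ by linearity. This is precisely where AC, rather than no-signalling, does the work: it is the frame-dependence of the spacelike time order that converts any residual order-dependence of the common-future correlations into an illegitimate, observer-dependent cause.
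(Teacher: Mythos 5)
Your proposal is correct and follows essentially the same route as the paper: a shared singlet, Bob's flipper versus Alice's free choice among the three Pauli bases, and AC forcing the Bob-first state $\tau_{AB}$ to be a single state that is perfectly correlated in all three mutually unbiased bases at once. The only difference is in the last step: where the paper explicitly parametrizes the three solution sets $\mbox{Ch}\{\ket{\epsilon_k}\bra{\epsilon_k}\}$, $\mbox{Ch}\{\ket{\xi_k}\bra{\xi_k}\}$, $\mbox{Ch}\{\ket{\zeta_k}\bra{\zeta_k}\}$ and shows their intersection is empty, you obtain the same contradiction via the identity $(\sigma_x\otimes\sigma_x)(\sigma_y\otimes\sigma_y)(\sigma_z\otimes\sigma_z)=-\mathbbm{1}$, a clean equivalent.
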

\begin{proof}
Let Alice and Bob are two spacelike separated parties sharing a bipartite two-level quantum system prepared in the composite state $\rho_{ab} = \ket{\psi^-}_{ab}\bra{\psi^-}$, where $\ket{\psi^-}_{ab}:= (\ket{0}_a\ket{1}_b-\ket{1}_a\ket{0}_b)/\sqrt{2}$; $\{\ket{0},\ket{1}\}$ be the two normalized eigenkets of Pauli $\sigma_z$ operator. The state can be prepared at their common past and then Alice and Bob become far separated with their respective subsystems, as considered in seminal Einstein-Rosen-Podolsky (EPR) thought experiment \cite{EPR35}. In contrary to the statement of the Theorem, assume there exists a universal flipping device $\mathrm{F}$ and Bob possesses this machine, with which we will analyze the following cases. 

{\it Case-I:} Suppose in some inertial reference frame Alice first performs $\sigma_z$ measurement on her part of the composite system and then Bob applies the flipping device on his part of the system. After Alice's measurement, due to the wavefunction collapse, the composite system evolves to 
\footnotesize
$$\sigma_{ab}=\frac{1}{2}(\ket{0}_a\bra{0}\otimes\ket{1}_b\bra{1}+\ket{1}_a\bra{1}\otimes\ket{0}_b\bra{0}),$$
\normalsize
and after the action of Bob's flipping device the state becomes
\footnotesize
$$\eta_{ab}=\frac{1}{2}(\ket{0}_a\bra{0}\otimes\ket{0}_b\bra{0}+\ket{1}_a\bra{1}\otimes\ket{1}_b\bra{1}).$$
\normalsize
Note that the state update after the action of $\mathrm{F}$ simply follows from the definition of the device, and importantly, unlike other proofs of no-flipping theorem, linearity is not assumed at this step. Now in the same inertial frame consider the opposite time ordering of the concerned operations, {\it i.e.} Bob first applies $\mathrm{F}$ on his part and then Alice performs $\sigma_z$ measurement on her part. Let after the action of Bob the updated state is $\chi_{ab}$. Since we have not put any constraint on the action of $\mathrm{F}$, other than its definition, the form of the state $\chi_{ab}$ allows a vast possibilities. At this point AC principle plays crucial role. According to this principle after Alice's measurement on her part of $\chi_{ab}$ the composite system should be same as $\eta_{ab}$. Otherwise, time ordering of the spacelike separated events (Alice's $\sigma_z$ measurement and Bob's application of the flipping device $\mathrm{F}$) will lead to an observable distinction, which is prohibited by AC. The requirement of $AC$ principle thus constrains the solution for $\chi_{ab}$ to lie in
\footnotesize
\begin{align}\label{eq1}
\left\{\!\begin{aligned}
\mbox{Ch}\left\{\ket{\epsilon_k}\bra{\epsilon_k}\right\},~~~~~~~~~~~~~~~~~~~~~~~~~~~~\\
\ket{\epsilon_k}:=\frac{1}{\sqrt{2}}\left(\ket{0}_a\ket{0}_b+e^{i\alpha_k}\ket{1}_a\ket{1}_b\right)~\&~\alpha_k\in[0,2\pi)
\end{aligned}\right\},	
\end{align}
\normalsize
where $\mbox{Ch}$ denotes the convex hull of a set.

{\it Case-II:} This case is similar to {\it Case-I}, but Alice's measurement $\sigma_z$ replaced with the measurement $\sigma_x$. While Bob first applies the flipping device on his part of the composite state $\ket{\psi^-}_{ab}$ and then Alice performs $\sigma_x$ measurement on her part then, in accordance with the AC principle, the solution for $\chi_{ab}$ lie within
\footnotesize
\begin{align}\label{eq2}
\left\{\!\begin{aligned}
\mbox{Ch}\left\{\ket{\xi_k}\bra{\xi_k}\right\},~~~~~~~~~~~~~~~~~~~~~~~~~~~~\\
\ket{\xi_k}:=\frac{1}{\sqrt{2}}\left(\ket{x}_a\ket{x}_b+e^{i\beta_k}\ket{\bar{x}}_a\ket{\bar{x}}_b\right)~\&~\beta_k\in[0,2\pi)
\end{aligned}\right\}.	
\end{align}
\normalsize   
Here, $\ket{x}:=(\ket{0}+\ket{1})/\sqrt{2}~\&~\ket{\bar{x}}:=(\ket{0}-\ket{1})/\sqrt{2}$ be the eigenkets of Pauli $\sigma_x$ operator. Importantly, the set (\ref{eq1}) and set (\ref{eq2}) have only one common solution which is the state $\ket{\phi^+}_{ab}:=(\ket{0}_a\ket{0}_b+\ket{1}_a\ket{1}_b)/\sqrt{2}$, corresponding to $\alpha_k=0$ in (\ref{eq1}) and $\beta_k=0$ in (\ref{eq2}) (see Figure \ref{fig3}).
\begin{figure}[t!]
\centering
\includegraphics[width=0.4\textwidth]{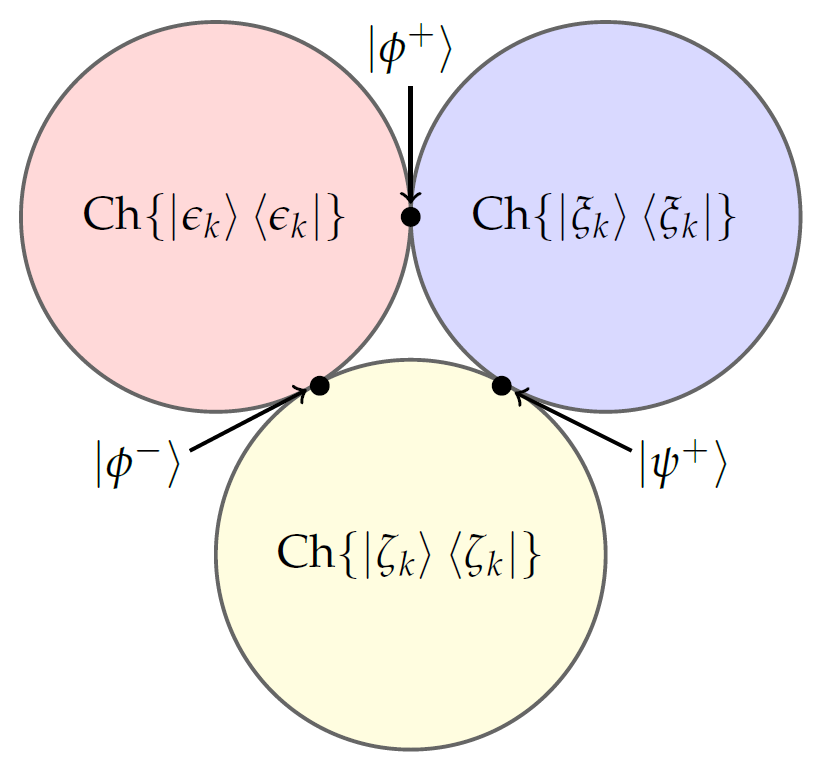}
\caption{(Color online) The circular regions denote the set of solutions for $\chi_{AB}$ obtained as a consequence of AC in the Cases I, II and III. Note that,  $\mbox{Ch}\{\ket{\epsilon_k}\bra{\epsilon_k}\}\cap\mbox{Ch}\{\ket{\xi_k}\bra{\xi_k}\}=\ket{\phi^+}\bra{\phi^+}$, $\mbox{Ch}\{\ket{\epsilon_k}\bra{\epsilon_k}\}\cap\mbox{Ch}\{\ket{\zeta_k}\bra{\zeta_k}\}=\ket{\phi^-}\bra{\phi^-}$, and $\mbox{Ch}\{\ket{\zeta_k}\bra{\zeta_k}\}\cap\mbox{Ch}\{\ket{\xi_k}\bra{\xi_k}\}=\ket{\psi^+}\bra{\psi^+}$; where $\ket{\psi^\pm}:=\frac{1}{\sqrt{2}}(\ket{0}\ket{1}\pm\ket{1}\ket{0})~\&~\ket{\phi^\pm}:=\frac{1}{\sqrt{2}}(\ket{0}\ket{0}\pm\ket{1}\ket{1})$. Furthermore, $\mbox{Ch}\{\ket{\epsilon_k}\bra{\epsilon_k}\}\cap\mbox{Ch}\{\ket{\xi_k}\bra{\xi_k}\}\cap\mbox{Ch}\{\ket{\zeta_k}\bra{\zeta_k}\}=\emptyset$, which discards the universal NOT in standard as well as in any general formulation of quantum theory.}\label{fig3}
\end{figure}

{\it Case-III:} Consider now a situation where Bob possesses the device $\mathrm{F}$ and Alice has the measuring device that performs Pauli $\sigma_y$ measurement. Similar arguments, likewise the earlier two cases, allow following possibilities for the updated state $\chi_{ab}$ after the action of Bob's flipping device,
\footnotesize
\begin{align}\label{eq3}
\left\{\!\begin{aligned}
\mbox{Ch}\left\{\ket{\zeta_k}\bra{\zeta_k}\right\},~~~~~~~~~~~~~~~~~~~~~~~~~~~~\\
\ket{\zeta_k}:=\frac{1}{\sqrt{2}}\left(\ket{y}_a\ket{y}_b+e^{i\delta_k}\ket{\bar{y}}_a\ket{\bar{y}}_b\right)~\&~\delta_k\in[0,2\pi)
\end{aligned}\right\},	
\end{align}
\normalsize  
where $\ket{y}:=(\ket{0}+i\ket{1})/\sqrt{2}~\&~\ket{\bar{y}}:=(\ket{0}-i\ket{1})/\sqrt{2}$ be the eigenkets of Pauli $\sigma_y$ operator. The set in (\ref{eq3}) does not contain the state $\ket{\phi^+}$ which happens to be the common solution for the earlier two cases. In other words, these three different considerations do not allow any common solution and hence discard the possibility of the universal flipping device $\mathrm{F}$. This completes the proof.     
\end{proof}
Few comments are in order. Firstly, in our proof we have assumed existence of the state $\ket{\psi^-}$, which also appears in many other works, such as Bohm version of the seminal EPR paradox \cite{Bohm57}, experimental establishment of the fact that quantum theory is not compatible with {\it local-realist} description \cite{Aspect81} {\it etc}. Recent developments in quantum information theory, however, allow self-testing of this state prepared from some un-characterized device \cite{Mayers04,McKague12}. 
Our argument furthermore invokes the wavefunction update due to the projective measurement. At this point we note that the wavefunction update has also been assumed by Gisin while deriving no-cloning theorem from no-signalling condition \cite{Gisin98} as well as while establishing superluminal communications in Weinberg's non-linear quantum mechanics \cite{Gisin90}. In fact, Polchinski had pointed out the crucial role of projection postulate in the second case \cite{Polchinski91}.  

{\it Implications of AC in GPT.--} The framework of generalized probability theory (GPT) confines a broad class of theories that use the notion of states to yield the outcome probabilities of measurements \cite{Barrett07,Chiribella10,Masanes11,Hardy11,Hardy12}. While an elementary system is specified by the tuple $(\Omega, \mathcal{E})$ of state and effect spaces respectively, a bipartite system $(\Omega_{ab}, \mathcal{E}_{ab})$ composed of elementary systems $(\Omega_a, \mathcal{E}_a)$ and $(\Omega_b, \mathcal{E}_b)$ is assumed to satisfy the no-signaling and local tomography condition's \cite{Self2}. A state $\omega_{ab}\in\Omega_{ab}$ is called entangled if it is not a convex mixture of product states, and such a state shared between two spatially separated parties (say Alice and Bob) can only be created in their common past. In this sense, entangled states possess `holistic' feature. Tomographic locality imposes `limited holism' \cite{Hardy12} as such states can be completely identified by comparing the local statistics of Alice and Bob at some point in their common causal future. 

The principle of AC, when applies to a GPT, prohibits the different time orders of spacelike separated local operations on a bipartite state $\omega_{ab}$ to evolve it into different states. In a GPT, this particular principle can be violated in different ways. In the strongest form of violation, different time orders of spacelike events/processes on composite state result in different final states that Alice or/and Bob can distinguish from her or/and his local statistics. However, such a violation enables instantaneous communication from Bob to Alice (or/and Alice to Bob) and thus contradicts relativistic causality principle. In the weakest form of violation, the resulting states cannot be distinguished neither by Alice nor by Bob locally. Even-more, the states cannot be distinguished by comparing the local statistics at some common future of Alice and Bob. However, it might happen that some joint measurement (eg. Bell basis measurement in quantum theory or its generalization considered in GPT \cite{Czekaj18}) on the bipartite state leads to a distinction between two possible states and hence violates the AC principle. If such a violation happens, the theory will not be tomographic local anymore, and hence will allow ample amount of holism as opposed to `limited holism'. In other words, the AC principle appends adequate justification to the tomographic locality condition from a causal perspective.  

In between weak and strong violation, there is a scope of intermediate violation which contradicts neither relativistic causality nor tomographic locality. Depending on the time order of Alice and Bob's spacelike separated operations on a bipartite state $\omega_{ab}$, the future joint states can be different in a way that yield identical marginal statistics but correlations between these local measurements are different. Therefore, unlike the weak violation, the intermediate one does not require any joint measurement to be performed on the bipartite system to detect the time-order of Alice and Bob's spacelike separated operations. For such a violation, a given joint state $\omega_{ab}$ will result into two different joint input-output probability distributions for a given pair of local measurements. Now, for a theory allowing signalling correlation, freedom of choice of the observers enforces the state to carry `excess baggage' of probability distributions -- there will be two different joint probability distributions for a given set of local measurements performed on the composite state \cite{Bub14}. Such a requirement, however, is not there in no-signalling theories. But, a no-signalling theory allowing intermediate violation of AC, has to bear this `excess baggage' of probability distributions. Note that, in Theorem \ref{theo1}, impossibility of this intermediate violation is sufficient to deduce the desired conclusion.

{\it Discussions.--} We have analyzed a novel primitive in causal framework, namely, the principle of absoluteness of cause, and discuss implications of its violation in generalized operational theory. Interestingly, this primitive yields some causal justification to the postulate of tomographic locality which is considered as an axiom in the physical reconstruction of Hilbert space quantum theory. Using this causal primitive we also derive a basic no-go result which holds true in any generalized formulation of quantum theory, possibly allowing stochastic nonlinear evolution. Apart from the foundational interests, researchers have also explored computational power of such generalized models \cite{Abrams98,Aaronson05,Bennett09,Childs16}. Our no-go theorem establishes that any such generalized model will not approve perfect universal NOT logic if it is meant to be compatible with the basic primitive of absoluteness of cause. Few questions are in order for further research. First of all, it might be interesting to obtain some bound on the degree of imperfection on NOT operation starting from the AC principle \cite{Martini02,Gisin02,Werner99}. It will be worth exploring to see what kind of general restriction(s) the AC principle imposes on the dynamics in any generic formulation of quantum theory. It might also be interesting to explore connection between this new causal primitive with the second law of thermodynamics and study its implication in spacetime structures that are more exotic than allowed in special theory of relativity. 

{\bf Acknowledgement:} It is our pleasure to thank Prof. Nicolas Gisin and Prof. Michael J. W. Hall for their insightful comments on the earlier version of the manuscript. MB acknowledges support through the research grant of INSPIRE Faculty fellowship from the Department of Science and Technology, Government of India.

\end{document}